\documentclass{IEEEtran}

%\IEEEoverridecommandlockouts
% The preceding line is only needed to identify funding in the first footnote. If that is unneeded, please comment it out.

\usepackage{cite}
\usepackage{amsmath,amssymb,amsfonts,amsthm,newtxmath,mleftright,siunitx}
\usepackage{graphicx,xcolor,booktabs}

\sisetup{group-digits=false}

%%% COMMAND DEFINITIONS %%%

%\newcommand{\whatsnew}[1]{{\color{purple} #1}}

\renewcommand{\vec}[1]{\ensuremath{\mathbf{#1}}} % Boldface vectors
\newcommand{\mat}[1]{\ensuremath{\mathbf{#1}}} % Boldface matrices
\newcommand{\T}{\ensuremath{\mathsf{T}}} % Transpose

\newcommand{\erf}{\operatorname{erf}}
\newcommand{\erfc}{\operatorname{erfc}}

\newcommand{\pd}{\ensuremath{p_\mathit{d}}}
\newcommand{\pfa}{\ensuremath{p_\mathit{fa}}}

\newtheorem{proposition}{Proposition}

\theoremstyle{definition}

\theoremstyle{remark}
\newtheorem*{remark}{Remark}

\begin{document}
	
\title{Noise-Type Radars: Probability of Detection vs.\ Correlation Coefficient and Integration Time}

\author{
	\IEEEauthorblockN{David Luong,~\IEEEmembership{Graduate Student Member, IEEE}, Bhashyam Balaji,~\IEEEmembership{Senior Member, IEEE}, and \\ Sreeraman Rajan,~\IEEEmembership{Senior Member, IEEE}}
	
	\thanks{D.\ Luong is with Carleton University, Ottawa, ON, Canada K1S 5B6. Email: david.luong3@carleton.ca.}% <-this % stops a space
	\thanks{B.\ Balaji is with Defence Research and Development Canada, Ottawa, ON, Canada K2K 2Y7. Email: bhashyam.balaji@drdc-rddc.gc.ca.}%
	\thanks{S.\ Rajan is with Carleton University, Ottawa, ON, Canada K1S 5B6. Email: sreeraman.rajan@carleton.ca.}%
}

\maketitle

\begin{abstract}
	Noise radars have the same mathematical description as a type of quantum radar known as \emph{quantum two-mode squeezing radar}. Although their physical implementations are very different, this mathematical similarity allows us to analyze them collectively. We may consider the two types of radars as forming a single class of radars, called \emph{noise-type radars}. The target detection performance of noise-type radars depends on two parameters: the number of integrated samples and a correlation coefficient. In this paper, we show that when the number of integrated samples is large and the correlation coefficient is low, the detection performance becomes a function of a single parameter: the number of integrated samples multiplied by the square of the correlation coefficient. We then explore the detection performance of noise-type radars in terms of this emergent parameter; in particular, we determine the probability of detection as a function of this parameter.
\end{abstract}

\begin{IEEEkeywords}
	Quantum radar, quantum two-mode squeezing radar, noise radar, probability of detection, correlation, integration time
\end{IEEEkeywords}

\section{Introduction}

Quantum radar, which has the potential to improve sensing performance compared to conventional radars, has been studied from a theoretical point of view for more than a decade \cite{lloyd2008qi,tan2008quantum}. But it was only in the last several years that quantum radar has begun making the transition from theory to experiment. After several quantum lidar experiments \cite{lopaeva2013qi,zhang2015entanglement,balaji2018qi}, this push toward the experimental realm culminated in an experiment by Wilson et al.\ that validated the concept of a practically-realizable quantum radar design at microwave frequencies \cite{chang2018quantum,luong2019roc}. Shortly after, the results of the Wilson experiment were confirmed by a replication experiment by a group based in the Institute of Science and Technology Austria \cite{barzanjeh2019experimental}. The radar architecture described in \cite{luong2019roc} is called \emph{quantum two-mode squeezing radar} (QTMS radar).

One of the most interesting properties of QTMS radar is that it forms a link between quantum radar and standard noise radars of the type previously studied in the literature \cite{cooper1967random,lukin1998millimeter,narayanan2004design,thayaparan2006noise,lukin2008Ka,tarchi2010SAR,narayanan2016noise,savci2019trials,savci2020noise}. Although the physical implementation of noise radars and QTMS radars are very different, they have similar mathematical descriptions \cite{luong2019cov}. They are both characterized by a correlation coefficient $\rho$, with QTMS radars generally achieving higher values of $\rho$ than standard noise radars. The mathematical connection between the two classes of radars means that QTMS radars inherit some of the desirable properties of noise radars, such as low-probability-of-intercept operation and excellence in spectrum sharing \cite{kulpa2013signal,wasserzier2019noise,luong2020magazine}. Thus, we may consider QTMS radars and noise radars as subsets of a larger class of radars. We will call this class of radars \emph{noise-type radars}, because both classes of radars employ noise as a transmit signal.

In \cite{luong2020int}, we found, as a general rule, that increasing $\rho$ by a factor of $\alpha$ is equivalent to increasing the number of integrated samples $N$ by a factor of $\alpha^2$. This heuristic is particularly easy to use in practice because $\rho$ can be estimated easily from a single set of radar detection data \cite{luong2022structured}; unlike the signal-to-noise ratio, there is no need to measure the signal and the noise separately. It explains the experimental observation in \cite{luong2019roc} that the detection performance QTMS radar prototype was equivalent to that of a comparable classical noise radar whose integration time was increased by a factor of 8. In the present paper, we extend the analysis in \cite{luong2020int} by showing that the detection performance of noise-type radars depends on the combination $N\rho^2$ for a variety of detectors; in \cite{luong2020int} we only considered one detector. We also show how $N\rho^2$ can be used as an aid to designing noise-type radars with given performance requirements.

The remainder of the paper is organized as follows. After introducing some of the basic theory of noise-type radars in Sec.\ \ref{sec:background}, we introduce three detectors suitable for noise radar target detection in Sec.\ \ref{sec:detectors}. In Sec.\ \ref{sec:detection_perf}, we show that the receiver operating characteristic (ROC) curves for these detectors depend on $N$ and $\rho$ only through the combination $N\rho^2$ when $N$ is large and $\rho$ is small. In Sec.\ \ref{sec:pd}, we show plots of the probability of detection ($\pd$) as a function of $N\rho^2$. These plots were inspired by the commonly-used plots of $\pd$ vs.\ SNR, of which \cite[Fig.\ 2.6]{skolnik1962introduction} is an example. Both types of plots serve the role of showing, at a glance, what parameter values are required to achieve a given detection probability. We then give approximations of $\pd$ that do not rely on special functions like the Marcum $Q$-function. We also explore what happens in the small-$N$ limit, where the assumptions used to derive the results in this paper break down. Finally, Sec.\ \ref{sec:conclusion} concludes the paper.

\section{Background}
\label{sec:background}

The noise-type radars we consider in this paper have two electromagnetic signals associated with them: the \emph{received signal} and a \emph{reference signal} used for matched filtering. This reference signal is, in principle, a copy of the transmit signal. However, we consider the reference signal instead of the transmit signal because only the reference signal is available to the radar receiver, and it may not be an absolutely perfect copy of the transmit signal.

In the following, we assume that the transmit and reference signals are white noise processes and that they are jointly Gaussian. We further assume that any external noise (e.g.\ system noise or atmospheric noise) can be modeled as additive white Gaussian noise. Under these assumptions, the received and reference signals are likewise white noise processes that are jointly Gaussian. We denote the in-phase and quadrature voltages of the received signal by $I_1[n]$ and $Q_1[n]$, respectively, and the corresponding voltages of the reference signal by $I_2[n]$ and $Q_2[n]$. Because these four time series, being white noise, are uncorrelated in time, we may drop the time index $n$ and consider only the case where the time lag is zero. 

These considerations allow us to model the signals associated with a noise-type radar by the real-valued random vector $\vec{x} = [I_1, Q_1, I_2, Q_2]^\T$, where $\vec{x}$ follows a multivariate Gaussian distribution with zero mean and covariance matrix
\begin{equation} \label{eq:NR_cov}
	\mat{\Sigma}(\sigma_1, \sigma_2, \rho, \phi) =
	\begin{bmatrix}
		\sigma_1^2 \mat{1}_2 & \rho \sigma_1 \sigma_2 \mat{R}(\phi) \\
		\rho \sigma_1 \sigma_2 \mat{R}(\phi)^\T & \sigma_2^2 \mat{1}_2
	\end{bmatrix} \! .
\end{equation}
This matrix, which was derived in \cite{luong2019cov}, depends on the following parameters:
\begin{itemize}\setlength{\itemsep}{2pt}
	\item $\sigma_1^2$: the power of the received signal
	\item $\sigma_2^2$: the power of the reference signal
	\item $\rho$: a coefficient characterizing the correlation between the received and reference signals
	\item $\phi$: the phase between the two signals.
\end{itemize}
Furthermore, $\mat{1}_2$ is the $2 \times 2$ identity matrix and $\mat{R}(\phi)$ is the rotation matrix 
\begin{equation}
	\mat{R}(\phi) = 
	\begin{bmatrix}
		\cos \phi & \sin \phi \\
		-\sin \phi & \cos \phi
	\end{bmatrix} \! .
\end{equation}
QTMS radars are described by a matrix of the same overall form, but with the reflection matrix
\begin{equation}
	\mat{R}'(\phi) = 
	\begin{bmatrix}
		\cos \phi & \sin \phi \\
		\sin \phi & -\cos \phi
	\end{bmatrix} \! .
\end{equation}
substituted for $\mat{R}(\phi)$. The results in this paper do not depend on whether $\mat{R}(\phi)$ or $\mat{R}'(\phi)$ is used.

\section{Detectors for Target Detection}
\label{sec:detectors}

Out of the four parameters that appear in the covariance matrix \eqref{eq:NR_cov}, the correlation coefficient $\rho$ is the most important for target detection. This is because, in the absence of clutter, $\rho$ is nonzero only when there a target is present to reflect some of the transmitted signal to the radar receiver. In the absence of a target, the received signal would consist only of external noise that is uncorrelated with the reference signal. Hence, the target detection problem for noise-type radars is equivalent to deciding between the following hypotheses:
\begin{equation} \label{eq:hypotheses}
	\begin{alignedat}{3}
		H_0&: \rho = 0 &&\quad\text{Target absent} \\
		H_1&: \rho > 0 &&\quad\text{Target present.}
	\end{alignedat}
\end{equation}

In previous work, several test statistics---or \emph{detectors}---for the noise radar target detection problem were studied. We will consider three of them here. Before we do so, however, we establish the following notation. For any random variable $X$, we denote its sample mean (calculated from $N$ independent samples $x_1, \dots, x_n$ drawn from the distribution of $X$) by an overline:
\begin{equation}
	\bar{X} = \frac{1}{N} \sum_{n = 1}^N x_n.
\end{equation}
The variable $N$ will always be reserved for the number of samples used to calculate the sample mean. More specifically, since we are working in the context of radars, we say that $N$ is the number of samples integrated by the radar. 

We also define the following auxiliary quantities:
\begin{subequations}
	\begin{align}
		\label{eq:aux_P1}
		P_1 &= I_1^2 + Q_1^2 \\
		\label{eq:aux_P2}
		P_2 &= I_2^2 + Q_2^2 \\
		\label{eq:aux_Rc}
		R_c &= I_1 I_2 \pm Q_1 Q_2 \\
		\label{eq:aux_Rs}
		R_s &= I_1 Q_2 \mp I_2 Q_1.
	\end{align}
\end{subequations}
The upper signs apply when the rotation matrix $\mat{R}(\phi)$ is used in \eqref{eq:NR_cov}; the lower signs apply when the reflection matrix $\mat{R}'(\phi)$ is used. 

We are now in a position to state the three detectors which will be considered in this paper.

\begin{proposition} \label{prop:detNP}
	If $\sigma_1 = \sigma_2 = 1$ and $\phi = 0$, the locally most powerful test statistic for $\rho$ close to zero is
	\begin{equation}
		D_0 = \bar{R}_c.
	\end{equation}
\end{proposition}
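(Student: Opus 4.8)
The plan is to identify the locally most powerful (LMP) statistic with the \emph{score}: for the one-sided problem in \eqref{eq:hypotheses}, a first-order Taylor expansion of the likelihood ratio $p(\vec{x};\rho)/p(\vec{x};0) \approx 1 + \rho\,\partial_\rho \ln p(\vec{x};\rho)|_{\rho=0}$ shows that, for small $\rho > 0$, the Neyman--Pearson test reduces to thresholding $\partial_\rho \ln p|_{\rho=0}$. So the entire task is to compute this derivative for the zero-mean Gaussian with covariance \eqref{eq:NR_cov} at $\sigma_1 = \sigma_2 = 1$, $\phi = 0$, and show it is a positive multiple of $\bar{R}_c$.

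First I would write the single-sample log-likelihood $\ln p(\vec{x};\rho) = \mathrm{const} - \tfrac12 \ln|\mat{\Sigma}| - \tfrac12 \vec{x}^\T \mat{\Sigma}^{-1}\vec{x}$ and differentiate using the standard identities $\partial_\rho \ln|\mat{\Sigma}| = \operatorname{tr}(\mat{\Sigma}^{-1}\partial_\rho\mat{\Sigma})$ and $\partial_\rho \mat{\Sigma}^{-1} = -\mat{\Sigma}^{-1}(\partial_\rho\mat{\Sigma})\mat{\Sigma}^{-1}$, giving
\begin{equation}
	\partial_\rho \ln p(\vec{x};\rho) = -\tfrac12\operatorname{tr}\!\bigl(\mat{\Sigma}^{-1}\partial_\rho\mat{\Sigma}\bigr) + \tfrac12\,\vec{x}^\T \mat{\Sigma}^{-1}(\partial_\rho\mat{\Sigma})\mat{\Sigma}^{-1}\vec{x}.
\end{equation}

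The key simplification is to evaluate at $\rho = 0$. With $\phi = 0$ the matrix $\mat{R}(\phi)$ becomes $\mat{1}_2$, so $\mat{\Sigma}|_{\rho=0} = \mat{1}_4$ and hence $\mat{\Sigma}^{-1}|_{\rho=0} = \mat{1}_4$. The constant matrix $\mat{A} := \partial_\rho\mat{\Sigma}$ then couples only the received and reference blocks and has zero diagonal, so $\operatorname{tr}(\mat{A}) = 0$ and the log-determinant term drops out. The score collapses to $\tfrac12\,\vec{x}^\T\mat{A}\vec{x}$, and expanding this quadratic form gives $\vec{x}^\T\mat{A}\vec{x} = 2(I_1I_2 + Q_1Q_2) = 2R_c$ by \eqref{eq:aux_Rc}. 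Thus the per-sample score equals $R_c$, and summing over the $N$ independent samples yields a total score of $N\bar{R}_c$. Since $N > 0$ is a fixed constant, thresholding the score is identical to thresholding $D_0 = \bar{R}_c$, which proves the claim.

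I expect the only real obstacle to be bookkeeping: correctly forming $\mat{A}$ from \eqref{eq:NR_cov}, confirming its diagonal vanishes so the normalization term contributes nothing, and tracking the upper/lower sign convention so that the quadratic form lands on $R_c$ rather than $R_s$. Everything downstream is immediate once the covariance reduces to the identity at $\rho = 0$.
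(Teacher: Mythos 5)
Your derivation is correct, and it is self-contained in a way the paper's own ``proof'' is not: the paper simply defers to \cite[Sec.~V]{luong2022family} without reproducing any argument, so there is no in-paper derivation to compare against. Your route---identifying the locally most powerful statistic with the score $\partial_\rho \ln p(\vec{x};\rho)\big|_{\rho=0}$ and evaluating it for the Gaussian model \eqref{eq:NR_cov}---is the standard derivation of LMP tests and is almost certainly the substance of the cited reference. The computation checks out: with $\sigma_1=\sigma_2=1$ and $\phi=0$, the covariance reduces to $\mat{\Sigma}|_{\rho=0}=\mat{1}_4$; the matrix $\mat{A}=\partial_\rho\mat{\Sigma}$ couples only the cross blocks and is traceless, so the $\ln|\mat{\Sigma}|$ term contributes nothing; and $\vec{x}^\T\mat{A}\vec{x}=2(I_1I_2+Q_1Q_2)=2R_c$ under the upper-sign convention of \eqref{eq:aux_Rc} appropriate to $\mat{R}(\phi)$. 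Summing the per-sample scores over $N$ i.i.d.\ samples gives $N\bar{R}_c$, which is threshold-equivalent to $D_0$. The one point worth tightening is your opening step: expanding the likelihood ratio to first order in $\rho$ and asserting that the Neyman--Pearson test ``reduces to'' thresholding the score is a heuristic; the clean statement is the generalized Neyman--Pearson lemma, under which the size-$\alpha$ test maximizing the slope of the power function at $\rho=0^{+}$ is exactly the score test. That is a matter of rigor in the justification rather than a gap in the argument, and your conclusion stands.
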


\begin{proof}
	See \cite[Sec.\ V]{luong2022family}.
\end{proof}

\begin{proposition}
	The maximum likelihood estimator for $\rho$, which is also the generalized likelihood ratio test statistic for the hypotheses \eqref{eq:hypotheses}, is
	\begin{equation}
		\hat{\rho} = \sqrt{ \frac{\bar{R}_c^2 + \bar{R}_s^2}{\bar{P}_1 \bar{P}_2} }.
	\end{equation}
\end{proposition}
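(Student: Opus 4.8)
The plan is to treat this as a standard maximum-likelihood computation for a zero-mean multivariate Gaussian and then show that the generalized likelihood ratio collapses to a monotone function of $\hat{\rho}$. First I would write the log-likelihood of the $N$ independent samples $\vec{x}_1,\dots,\vec{x}_N$ as $\ell = \text{const} - \tfrac{N}{2}\log\lvert\mat{\Sigma}\rvert - \tfrac{N}{2}\operatorname{tr}(\mat{\Sigma}^{-1}\mat{S})$, where $\mat{S} = \tfrac{1}{N}\sum_n \vec{x}_n\vec{x}_n^\T$ is the sample covariance. The key simplification comes from the block structure of $\mat{\Sigma}$ in \eqref{eq:NR_cov}: because $\mat{R}(\phi)$ is orthogonal, a Schur-complement computation gives $\lvert\mat{\Sigma}\rvert = \sigma_1^4\sigma_2^4(1-\rho^2)^2$ and an inverse of the same block form, with diagonal blocks proportional to $\mat{1}_2$ and off-diagonal blocks proportional to $\mat{R}(\phi)$. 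Substituting this inverse into the quadratic form, I expect every sample to enter only through the four combinations $P_1, P_2, R_c, R_s$ of \eqref{eq:aux_P1}--\eqref{eq:aux_Rs}, so that $\operatorname{tr}(\mat{\Sigma}^{-1}\mat{S})$ depends on the data only through $\bar{P}_1, \bar{P}_2, \bar{R}_c, \bar{R}_s$. This identifies the sufficient statistics and reduces the problem to maximizing a function of four scalars.

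Second, I would carry out the maximization in stages. The phase $\phi$ appears only through the term $\cos\phi\,\bar{R}_c + \sin\phi\,\bar{R}_s$, whose maximum over $\phi$ is $\sqrt{\bar{R}_c^2 + \bar{R}_s^2}$; profiling out $\phi$ therefore replaces the pair $(\bar{R}_c,\bar{R}_s)$ by this single quantity. What remains is to maximize over $\sigma_1^2, \sigma_2^2, \rho$. Setting the three partial derivatives to zero yields a coupled system, and disentangling it is the main obstacle. I anticipate that subtracting the $\sigma_1^2$- and $\sigma_2^2$-stationarity equations forces $\bar{P}_1/\sigma_1^2 = \bar{P}_2/\sigma_2^2$, and that combining this with the $\rho$-stationarity equation factors as $(1-\rho^2)(\,\cdot\,) = 0$, pinning down $\hat{\sigma}_i^2 = \bar{P}_i/2$ and the claimed $\hat{\rho} = \sqrt{(\bar{R}_c^2 + \bar{R}_s^2)/(\bar{P}_1\bar{P}_2)}$. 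I would keep track of the admissible range $0 \le \rho < 1$ so as to discard the spurious root $\rho = 1$.

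Finally, for the GLRT claim I would repeat the maximization under $H_0$, where $\rho = 0$ and only $\sigma_1^2, \sigma_2^2$ are free; the same argument gives $\hat{\sigma}_i^2 = \bar{P}_i/2$ again. Evaluating the maximized log-likelihoods under each hypothesis, I expect the $\log\lvert\mat{\Sigma}\rvert$ and trace terms to cancel cleanly: at the unconstrained optimum the quantity $\tfrac{N}{2}\operatorname{tr}(\mat{\Sigma}^{-1}\mat{S})$ equals $2N$, and the same holds under $H_0$, so that $\log\Lambda = \ell_{H_1} - \ell_{H_0} = -N\log(1-\hat{\rho}^2)$. Since this is strictly increasing in $\hat{\rho}$ on $[0,1)$, thresholding the GLRT is equivalent to thresholding $\hat{\rho}$, which establishes that the maximum-likelihood estimator and the GLRT statistic coincide as detectors. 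The only delicate point in this last step is verifying that the nuisance-parameter estimates agree under $H_0$ and $H_1$, so that the determinant factors $\sigma_1^4\sigma_2^4$ cancel; once that is checked, the monotone relationship is immediate.
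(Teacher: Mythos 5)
Your proposal is correct, and it cannot be compared line-by-line with the paper's own proof for a simple reason: the paper does not prove this proposition in-text at all, but simply cites \cite[Sec.\ V]{luong2022structured}. Your derivation is therefore a self-contained replacement for that citation, and the steps all check out. Concretely: the block structure of \eqref{eq:NR_cov} with $\mat{R}(\phi)$ orthogonal does give $\lvert\mat{\Sigma}\rvert = \sigma_1^4\sigma_2^4(1-\rho^2)^2$ and an inverse with diagonal blocks $\tfrac{1}{\sigma_i^2(1-\rho^2)}\mat{1}_2$ and off-diagonal blocks $-\tfrac{\rho}{\sigma_1\sigma_2(1-\rho^2)}\mat{R}(\phi)$, so that
\begin{equation*}
	\operatorname{tr}(\mat{\Sigma}^{-1}\mat{S}) = \frac{1}{1-\rho^2}\mleft[ \frac{\bar{P}_1}{\sigma_1^2} + \frac{\bar{P}_2}{\sigma_2^2} - \frac{2\rho\,(\cos\phi\,\bar{R}_c + \sin\phi\,\bar{R}_s)}{\sigma_1\sigma_2} \mright],
\end{equation*}
confirming that $(\bar{P}_1,\bar{P}_2,\bar{R}_c,\bar{R}_s)$ are sufficient and that profiling out $\phi$ replaces the last term by $\sqrt{\smash[b]{\bar{R}_c^2+\bar{R}_s^2}}$. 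The stationarity system then does resolve exactly as you anticipate: writing $t = \bar{P}_1/\sigma_1^2 = \bar{P}_2/\sigma_2^2$ (forced by subtracting the two variance equations) and substituting into the $\rho$-equation, the factor $(1-\rho^2)$ cancels and leaves $t = 2$, whence $\hat{\sigma}_i^2 = \bar{P}_i/2$ and $\hat{\rho} = \sqrt{(\bar{R}_c^2+\bar{R}_s^2)/(\bar{P}_1\bar{P}_2)}$. Your GLRT step is also sound: under both hypotheses the nuisance estimates are $\bar{P}_i/2$, the trace term equals $4$ at both optima (so the $2N$ contributions cancel), and $\log\Lambda = -N\log(1-\hat{\rho}^2)$ is strictly increasing in $\hat{\rho}$ on $[0,1)$, which is precisely what justifies the proposition's claim that the MLE and the GLRT statistic coincide as detectors. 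What your approach buys over the paper's citation-only proof is exactly this last point made explicit: the equivalence of the two statistics is not an incidental fact but a consequence of the nuisance estimates agreeing under $H_0$ and $H_1$ and of the monotone relationship between $\log\Lambda$ and $\hat{\rho}$.
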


\begin{proof}
	See \cite[Sec.\ V]{luong2022structured}.
\end{proof}

\begin{proposition}
	The detector that arises from treating $I_1[n] + j Q_1[n]$ as a complex-valued time series and performing matched filtering with the reference signal $I_2[n] \pm j Q_2[n]$ is
	\begin{equation}
		D_\mathrm{MF} = \sqrt{\bar{R}_c^2 + \bar{R}_s^2}.
	\end{equation}
	As above, the sign to be chosen depends on whether $\mat{R}(\phi)$ or $\mat{R}'(\phi)$ is used in \eqref{eq:NR_cov}.
\end{proposition}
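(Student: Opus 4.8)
The plan is to begin from an explicit operational definition of the quantity produced by ``matched filtering with the reference signal,'' and then verify by a direct expansion that it equals $D_\mathrm{MF}$. First I would form the two complex baseband time series $z_1[n] = I_1[n] + j Q_1[n]$ and $z_2[n] = I_2[n] \pm j Q_2[n]$, and take the matched-filter statistic to be the modulus of their normalized zero-lag cross-correlation,
\begin{equation}
	\left| \frac{1}{N} \sum_{n=1}^N z_1[n] \, z_2^*[n] \right|,
\end{equation}
where $z_2^*$ denotes complex conjugation. This is the standard output of correlating a received signal against a known reference; because the four voltage time series are white (Sec.\ \ref{sec:background}), only the zero-lag term survives, which is why a single sum over $n$ suffices.

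The crucial step is to expand the summand $z_1[n] z_2^*[n]$ into real and imaginary parts and recognize them as the auxiliary quantities in \eqref{eq:aux_Rc} and \eqref{eq:aux_Rs}. Dropping the time index and taking the upper sign for concreteness, $z_2^* = I_2 - j Q_2$, so that
\begin{equation}
	z_1 z_2^* = (I_1 I_2 + Q_1 Q_2) + j(I_2 Q_1 - I_1 Q_2) = R_c - j R_s,
\end{equation}
using the definitions of $R_c$ and $R_s$ with their upper signs. Since the sample mean is linear, averaging over $n$ gives $\bar{R}_c - j \bar{R}_s$, and taking the modulus yields $\sqrt{\bar{R}_c^2 + \bar{R}_s^2}$, which is exactly $D_\mathrm{MF}$. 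Note that the sign of the imaginary part is annihilated by the modulus and therefore never enters the final statistic.

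It then remains to confirm that the lower-sign case works identically. Choosing $z_2 = I_2 - j Q_2$ gives $z_1 z_2^* = (I_1 I_2 - Q_1 Q_2) + j(I_1 Q_2 + I_2 Q_1) = R_c + j R_s$ with the lower signs, whose averaged modulus is again $\sqrt{\bar{R}_c^2 + \bar{R}_s^2}$. This shows that the choice of $\mat{R}(\phi)$ versus $\mat{R}'(\phi)$---equivalently, the sign in $I_2 \pm j Q_2$---propagates consistently into the signs of $R_c$ and $R_s$ while leaving the form of $D_\mathrm{MF}$ untouched. I expect no real difficulty in the algebra; the one point I would take care to pin down at the outset is the convention behind ``matched filtering''---namely that it means the modulus of the conjugated cross-correlation at zero lag---since the identification of the real and imaginary parts with $R_c$ and $\mp R_s$ hinges entirely on fixing the conjugation and sign conventions up front.
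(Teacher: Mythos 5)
Your proposal is correct and follows essentially the same route as the paper's own proof: both compute the zero-lag conjugated cross-correlation $\overline{(I_1 + j Q_1)(I_2 \pm j Q_2)^*} = \bar{R}_c \mp j \bar{R}_s$ and take its magnitude to obtain $D_\mathrm{MF}$. Your version merely spells out the expansion and verifies both sign conventions explicitly, which the paper leaves implicit.
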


\begin{proof}
	Because we need only consider the case where the time lag between the two signals is zero, determining the output of the matched filter reduces to calculating
	\begin{equation}
		\overline{(I_1 + j Q_1)(I_2 \pm j Q_2)^*} = \bar{R}_c \mp j \bar{R}_s.
	\end{equation}
	For target detection, we need only consider the magnitude of the matched filter output, resulting in $D_\mathrm{MF}$.
\end{proof}

\section{Detection Performance and $N\rho^2$}
\label{sec:detection_perf}

In this section, we prove that all of the detectors listed in the previous section share one commonality: when $N$ is large and $\rho$ is small, the two parameters $N$ and $\rho$ ``coalesce'', and their ROC curves depend only on the single parameter $N \rho^2$. Note that this behavior is not limited to these three detectors alone; we have chosen them only as representative examples.

\begin{proposition} \label{prop:rhoN2_detNP}
	In the limit $N \to \infty$ and to first order in $\rho$, the ROC curve for $D_0$ depends on $N$ and $\rho$ only through $N\rho^2$.
\end{proposition}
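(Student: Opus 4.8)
The plan is to lean on the Central Limit Theorem. Since $D_0 = \bar{R}_c$ is the sample mean of $N$ independent copies of $R_c = I_1 I_2 + Q_1 Q_2$ (I take the upper sign throughout; the reflection case is identical by the observation at the end of Sec.\ \ref{sec:background}), its large-$N$ distribution under each hypothesis is approximately Gaussian and is therefore pinned down entirely by the mean and variance of $R_c$. The ROC curve is the set of pairs $(\pfa, \pd)$ swept out as the threshold $\gamma$ on $D_0$ varies, so once I know these two moments under $H_0$ and $H_1$ I can write $\pd$ as an explicit function of $\pfa$ and read off the parameter dependence.

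First I would specialize \eqref{eq:NR_cov} to the hypotheses of Proposition \ref{prop:detNP}: with $\sigma_1 = \sigma_2 = 1$ and $\phi = 0$ we have $\mat{R}(0) = \mat{1}_2$, so that the only nonzero cross-correlations are $\operatorname{Cov}(I_1, I_2) = \operatorname{Cov}(Q_1, Q_2) = \rho$, and in particular the pair $(I_1, I_2)$ is independent of the pair $(Q_1, Q_2)$. This independence decouples the two terms of $R_c$. I would then compute the first two moments of $R_c$ using the Isserlis (Wick) formula for products of jointly Gaussian variables. Under $H_0$ ($\rho = 0$) this gives $E[R_c] = 0$ and $\operatorname{Var}(R_c) = 2$; under $H_1$ it gives $E[R_c] = 2\rho$ and $\operatorname{Var}(R_c) = 2(1 + \rho^2)$.

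With these moments the CLT yields, for large $N$, $D_0 \sim \mathcal{N}(0,\, 2/N)$ under $H_0$ and $D_0 \sim \mathcal{N}(2\rho,\, 2(1+\rho^2)/N)$ under $H_1$. Writing $Q$ for the standard normal tail function, $\pfa = Q\bigl(\gamma\sqrt{N/2}\,\bigr)$ fixes the threshold as $\gamma = \sqrt{2/N}\,Q^{-1}(\pfa)$, and substituting into $\pd = Q\bigl((\gamma - 2\rho)/\sqrt{2(1+\rho^2)/N}\,\bigr)$ expresses $\pd$ as a function of $\pfa$. To first order in $\rho$ the $H_1$ variance collapses to the $H_0$ value $2/N$, the denominators cancel, and the mean-separation term becomes $2\rho/\sqrt{2/N} = \sqrt{2N\rho^2}$. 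The ROC relation therefore reduces to $\pd = Q\bigl(Q^{-1}(\pfa) - \sqrt{2N\rho^2}\,\bigr)$, in which $N$ and $\rho$ enter only through $N\rho^2$, as claimed.

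I expect the delicate step to be the joint limit rather than any single calculation, because the two approximations pull in opposite directions: the CLT wants $N$ large, while dropping the $O(\rho^2)$ term in the $H_1$ variance wants $\rho$ small. The clean way through is to hold $N\rho^2$ fixed, equivalently to scale $\rho \sim 1/\sqrt{N}$, so that the separation $\sqrt{N}\,\rho$ stays of order unity (giving a non-degenerate ROC) while the neglected variance correction is $O(\rho^2) = O(1/N) \to 0$. I would make this explicit by bounding the contribution of the discarded terms to the argument of $Q$ and showing it vanishes in the stated limit, which confirms that the surviving dependence is exactly on $N\rho^2$.
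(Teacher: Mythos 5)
Your proof is correct, and it arrives at exactly the expression the paper relies on, but by a self-contained route: the paper's own proof simply quotes the large-$N$ ROC formula for $D_0$ from a prior publication and then truncates it to first order in $\rho$, whereas you derive that formula from scratch via the CLT, Isserlis' theorem, and an explicit threshold sweep. Your moment computations are right ($E[R_c] = 2\rho$ and $\operatorname{Var}(R_c) = 2(1+\rho^2)$ at $\phi = 0$), and your resulting ROC relation
\begin{equation*}
	\pd = Q\!\left(\frac{Q^{-1}(\pfa) - \rho\sqrt{2N}}{\sqrt{1+\rho^2}}\right), \qquad Q(x) = \tfrac{1}{2}\erfc\!\left(x/\sqrt{2}\right),
\end{equation*}
coincides exactly with the unnumbered large-$N$ expression quoted at the start of the paper's proof (at $\phi = 0$), as one sees from $\erfc^{-1}(2\pfa) = Q^{-1}(\pfa)/\sqrt{2}$; the final step---dropping the $O(\rho^2)$ variance correction so that only $\rho\sqrt{2N} = \sqrt{2N\rho^2}$ survives---is identical in both, yielding \eqref{eq:ROC_0_small}. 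Your specialization to $\phi = 0$ is harmless: it matches the hypotheses of Prop.\ \ref{prop:detNP} and the paper's own Remark, and for general $\phi$ the same Wick computation merely changes the mean to $2\rho\cos\phi$ while the variance remains $2 + O(\rho^2)$, so the conclusion is unaffected. What your version buys is independence from the citation (and, in effect, a verification of it); what the paper's buys is brevity. Your closing observation---that the two approximations are mutually consistent precisely in the scaling $\rho \sim 1/\sqrt{N}$ with $N\rho^2$ held fixed, where the neglected variance term is $O(1/N)$---is a point the paper glosses over entirely, and it is the right way to make the double limit in the proposition's statement rigorous rather than merely formal.
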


\begin{proof}
	In the limit of large $N$, we showed in \cite{luong2022family} that the ROC curve is approximately
	\begin{equation}
		\pd(\pfa) = \frac{1}{2} \erfc \mleft( \frac{ \erfc^{-1}(2 p_\mathit{fa}) - \sqrt{N} \rho \cos \phi }{ \sqrt{1 + \rho^2 \cos^2 \phi} } \mright).
	\end{equation}
	where $\pd$ is the probability of detection, $\pfa$ is the probability of false alarm, and $\erfc(\cdot)$ is the complementary error function. Keeping only first-order terms in $\rho$, this reduces to
	\begin{equation} \label{eq:ROC_0_small}
		p_\mathit{d}(p_\mathit{fa}) \approx \frac{1}{2} \erfc \! \big[ \! \erfc^{-1}(2 p_\mathit{fa}) - \sqrt{N} \rho \cos \phi \big],
	\end{equation}
	which depends on $N$ and $\rho$ only through $N\rho^2$.
\end{proof}

\begin{remark}
	It is clear from \eqref{eq:ROC_0_small} that $D_0$ is a phase-dependent detector. However, as stated in Prop.\ \ref{prop:detNP}, this detector is locally most powerful only when $\phi = 0$. Therefore, for the remainder of this paper, we will take $\phi = 0$ whenever we consider $D_0$.
\end{remark}

\begin{proposition} \label{prop:rhoN2_detRho}
	In the limit $N \to \infty$ and to first order in $\rho$, the ROC curve for the $\hat{\rho}$ detector depends on $N$ and $\rho$ only through $N\rho^2$.
\end{proposition}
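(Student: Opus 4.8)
The plan is to apply the central limit theorem (CLT) to the four sample means appearing in $\hat{\rho}$ and track the resulting distribution as $N \to \infty$. First I would read off the first and second moments of $R_c$, $R_s$, $P_1$, and $P_2$ from the covariance matrix \eqref{eq:NR_cov}. The means follow directly: $E[P_1] = 2\sigma_1^2$ and $E[P_2] = 2\sigma_2^2$, while $E[R_c] = 2\rho\sigma_1\sigma_2\cos\phi$ and $E[R_s] = 2\rho\sigma_1\sigma_2\sin\phi$ are both linear in $\rho$. Since we work only to first order in $\rho$, I would evaluate the variances at $\rho = 0$, where $I_1, Q_1, I_2, Q_2$ are independent; a short Gaussian-moment calculation then gives $\mathrm{Var}(R_c) = \mathrm{Var}(R_s) = 2\sigma_1^2\sigma_2^2$ and $\mathrm{Cov}(R_c, R_s) = 0$. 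Note that $\hat{\rho}$ is scale-invariant, so the factors of $\sigma_1, \sigma_2$ will ultimately cancel.

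By the CLT, as $N \to \infty$ the sample means $\bar{R}_c$ and $\bar{R}_s$ are asymptotically independent Gaussians with the above means and variances $2\sigma_1^2\sigma_2^2/N$, while $\bar{P}_1$ and $\bar{P}_2$ concentrate at $2\sigma_1^2$ and $2\sigma_2^2$. Replacing the denominator of $\hat{\rho}^2$ by its limit $4\sigma_1^2\sigma_2^2$ and rescaling, I would introduce the standardized variables $U = \bar{R}_c / \sqrt{2\sigma_1^2\sigma_2^2/N}$ and $V = \bar{R}_s / \sqrt{2\sigma_1^2\sigma_2^2/N}$, which are independent unit-variance Gaussians with means $\sqrt{2N}\,\rho\cos\phi$ and $\sqrt{2N}\,\rho\sin\phi$. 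Then $2N\hat{\rho}^2 = U^2 + V^2$ follows a noncentral $\chi^2$ distribution with two degrees of freedom and noncentrality parameter $\lambda = 2N\rho^2(\cos^2\phi + \sin^2\phi) = 2N\rho^2$. The crucial observation is that $\phi$ cancels and $\lambda$ depends on $N$ and $\rho$ only through $N\rho^2$.

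The decision rule $\hat{\rho} > \gamma$ is equivalent to $U^2 + V^2 > 2N\gamma^2$, so both $\pfa$ and $\pd$ are tail probabilities of $\chi^2$ variables and can be written via the Marcum $Q$-function $Q_1$ as $\pd = Q_1\mleft(\sqrt{2N}\,\rho,\ \sqrt{2N}\,\gamma\mright)$ and $\pfa = Q_1\mleft(0,\ \sqrt{2N}\,\gamma\mright) = e^{-N\gamma^2}$. Inverting the latter gives $\sqrt{2N}\,\gamma = \sqrt{-2\ln\pfa}$, and substituting yields $\pd = Q_1\mleft(\sqrt{2N\rho^2},\ \sqrt{-2\ln\pfa}\mright)$. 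Since this expresses $\pd$ purely in terms of $\pfa$ and $N\rho^2$, the ROC curve depends on $N$ and $\rho$ only through $N\rho^2$, as claimed.

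The main obstacle I anticipate is making the double limit rigorous: the noncentrality $\lambda = 2N\rho^2$ is held finite while $N \to \infty$ and $\rho \to 0$, so the two approximations must be taken consistently. In particular, I must justify that (i) replacing $\bar{P}_1 \bar{P}_2$ by its deterministic limit perturbs the ROC only at subleading order, and (ii) evaluating $\mathrm{Var}(R_c)$ and $\mathrm{Var}(R_s)$ at $\rho = 0$ is legitimate, since the $O(\rho^2)$ corrections to these variances modify $\lambda$ only multiplicatively by $1 + O(\rho^2)$ and hence vanish in the regime where $N\rho^2$ is fixed. Confirming that these neglected terms genuinely disappear in the limit, rather than accumulating through the $N$-fold averaging, is the delicate part of the argument.
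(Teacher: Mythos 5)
Your proof is correct, but it takes a genuinely different route from the paper. The paper's proof is a two-line citation argument: it quotes the large-$N$ ROC formula \eqref{eq:ROC_detRho} previously derived in the authors' earlier work (Marcum $Q$-function with $1/(1-\rho^2)$ factors in both arguments), and then simply discards those factors as $O(\rho^2)$ corrections to obtain \eqref{eq:ROC_detRho_small}. You instead derive the asymptotic ROC from first principles: CLT applied to $\bar{R}_c, \bar{R}_s$ (whose means are $O(\rho)$ and whose variances you may evaluate at $\rho=0$ to the stated order), concentration of $\bar{P}_1\bar{P}_2$ at $4\sigma_1^2\sigma_2^2$, and the identification of $2N\hat{\rho}^2$ as asymptotically noncentral $\chi^2$ with two degrees of freedom and noncentrality $\lambda = 2N\rho^2$; calibrating the threshold via $\pfa = e^{-N\gamma^2}$ then reproduces \eqref{eq:ROC_detRho_small} exactly. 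Your moment calculations, the standardization giving means $\sqrt{2N}\rho\cos\phi$ and $\sqrt{2N}\rho\sin\phi$, and the Marcum-$Q$ identities ($Q_1(0,b)=e^{-b^2/2}$) all check out. What your approach buys: it is self-contained (no reliance on the prior Rice-distribution result), it makes the \emph{reason} for the emergence of $N\rho^2$ transparent---it is precisely the noncentrality parameter, with $\phi$ cancelling through $\cos^2\phi + \sin^2\phi = 1$---and, as a bonus, since $D_\mathrm{MF} = \sqrt{\bar{R}_c^2 + \bar{R}_s^2}$ differs from $\hat{\rho}$ only by the asymptotically deterministic denominator, the same argument proves Prop.~\ref{prop:rhoN2_detMF} and explains why \eqref{eq:ROC_detRho_small} and \eqref{eq:ROC_detMF} coincide. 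What the paper's approach buys is brevity and an exact intermediate expression valid at finite $\rho$, which makes visible precisely which terms (the $1-\rho^2$ scalings) are dropped in the first-order approximation. The rigor caveats you flag (justifying the replacement of $\bar{P}_1\bar{P}_2$ by its limit and the $\rho=0$ variances) are real but routine: both introduce only $1+O_P(N^{-1/2})$ and $1+O(\rho^2)$ multiplicative perturbations, which vanish in the regime where $N\rho^2$ is held fixed, so they do not threaten the conclusion at the paper's level of rigor.
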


\begin{proof}
	We showed in \cite{luong2019rice} that, when $N$ is greater than approximately 100, the ROC curve for $\hat{\rho}$ is
	\begin{equation} \label{eq:ROC_detRho}
		\pd(\pfa | \rho, N) = Q_1 \mleft( \frac{\rho \sqrt{2N}}{1 - \rho^2}, \frac{\sqrt{-2 \ln \pfa}}{1 - \rho^2} \mright)
	\end{equation}
	where $Q_1(\cdot, \cdot)$ is the Marcum $Q$-function of order 1 (not to be confused with the quadrature voltage $Q_1$); see also \cite{luong2022structured}. To first order in $\rho$, we have
	\begin{equation} \label{eq:ROC_detRho_small}
		\pd(\pfa | \rho, N) \approx Q_1 \mleft( \rho \sqrt{2N}, \sqrt{-2 \ln \smash{\pfa}} \mright).
	\end{equation}
	This expression depends on $N$ and $\rho$ only through $N\rho^2$, so the proposition holds.
\end{proof}

\begin{proposition} \label{prop:rhoN2_detMF}
	In the limit $N \to \infty$ and to first order in $\rho$, the ROC curve for $D_\mathrm{MF}$ depends on $N$ and $\rho$ only through $N\rho^2$.
\end{proposition}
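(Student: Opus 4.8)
The plan is to show that the matched-filter detector $D_\mathrm{MF} = \sqrt{\bar R_c^2 + \bar R_s^2}$ has a ROC that, in the stated limit, coincides with the one already obtained for $\hat\rho$ in Prop.\ \ref{prop:rhoN2_detRho}, so that the desired dependence on $N\rho^2$ follows immediately from \eqref{eq:ROC_detRho_small}. The key observation is that $D_\mathrm{MF}$ and $\hat\rho$ differ only by the normalizing denominator $\sqrt{\bar P_1 \bar P_2}$: we have $\hat\rho = D_\mathrm{MF}/\sqrt{\bar P_1 \bar P_2}$. Under $H_0$ (and to leading order under $H_1$ when $\rho$ is small), the powers $P_1$ and $P_2$ are independent of the cross terms, and by the law of large numbers $\bar P_1 \to \mathbb{E}[P_1] = 2\sigma_1^2$ and $\bar P_2 \to \mathbb{E}[P_2] = 2\sigma_2^2$ as $N \to \infty$. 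Hence the denominator converges to a deterministic constant, and after the standard rescaling $\sigma_1 = \sigma_2 = 1$ the two detectors become asymptotically equivalent up to this known scale factor.

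First I would set up the joint statistics of $\bar R_c$ and $\bar R_s$ directly. Each of $R_c$ and $R_s$ is a sum of products of the jointly Gaussian voltages in $\vec x$, so from the covariance matrix \eqref{eq:NR_cov} I can compute their means and (co)variances: under $H_1$ the means scale like $\rho\sigma_1\sigma_2$ (times $\cos\phi$ or $\sin\phi$), while the variances are dominated by the $\rho$-independent term $\sigma_1^2\sigma_2^2$. By the central limit theorem, $\sqrt{N}(\bar R_c, \bar R_s)$ is asymptotically bivariate Gaussian, so $D_\mathrm{MF}$ is (asymptotically) the magnitude of a 2D Gaussian vector with a nonzero mean of order $\rho\sqrt N$ — that is, a Rice-distributed variable. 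The threshold test on $D_\mathrm{MF}$ then yields a Marcum-$Q$ ROC of exactly the form in \eqref{eq:ROC_detRho_small}, whose two arguments depend on $(N,\rho)$ only through the product $\rho\sqrt{2N}$, hence only through $N\rho^2$.

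The cleanest route, which I would present as the main argument, is therefore the reduction to Prop.\ \ref{prop:rhoN2_detRho}: write $\hat\rho = D_\mathrm{MF}/\sqrt{\bar P_1 \bar P_2}$, argue that $\sqrt{\bar P_1 \bar P_2} \to 2\sigma_1\sigma_2 = 2$ in probability as $N\to\infty$, and conclude that the threshold event $\{\hat\rho > \gamma'\}$ and $\{D_\mathrm{MF} > \gamma\}$ coincide asymptotically after relabeling the threshold. Since \eqref{eq:ROC_detRho_small} already exhibits the $N\rho^2$ dependence, the same dependence transfers to $D_\mathrm{MF}$.

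\emph{The hard part} is justifying that replacing the random denominator $\sqrt{\bar P_1\bar P_2}$ by its deterministic limit does not disturb the ROC to the order we are tracking. The fluctuations of $\bar P_1\bar P_2$ about its mean are of order $1/\sqrt N$, and I must check that their coupling to the numerator contributes only at higher order in $\rho$ once we impose the joint scaling $N\to\infty$ with $\rho\sqrt N$ held fixed (the regime implicit in keeping $N\rho^2$ finite). Concretely, I would verify that the correlation between $\bar P_i$ and $(\bar R_c,\bar R_s)$ vanishes under $H_0$ and is $O(\rho)$ under $H_1$, so that the normalization perturbs each Marcum-$Q$ argument only at order $\rho^2\sqrt N \to 0$ relative to the retained terms — harmlessly absorbed into the ``first order in $\rho$'' approximation already invoked in the earlier propositions.
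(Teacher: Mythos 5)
Your proposal is correct in substance, but it takes a genuinely different route from the paper. The paper's own proof is a one-line appeal to prior work: it cites \cite{luong2022structured} for the asymptotic ROC formula $\pd(\pfa \mid \rho, N) = Q_1\big(\rho\sqrt{2N}, \sqrt{-2\ln \pfa}\big)$ and simply observes that $\rho\sqrt{2N} = \sqrt{2N\rho^2}$, so the dependence is only through $N\rho^2$. You instead give a self-contained asymptotic derivation, and in fact two: (i) a direct central-limit argument showing that $\sqrt{N}(\bar{R}_c, \bar{R}_s)$ is asymptotically a bivariate Gaussian with mean of norm proportional to $\rho\sqrt{N}$ and near-isotropic $\rho$-independent covariance, so that $D_\mathrm{MF}$ is asymptotically Rice-distributed and the threshold test yields exactly the Marcum-$Q$ ROC; and (ii) a Slutsky-type reduction $\hat{\rho} = D_\mathrm{MF}/\sqrt{\bar{P}_1 \bar{P}_2}$ with $\sqrt{\bar{P}_1 \bar{P}_2} \to 2\sigma_1\sigma_2$ in probability, transferring the conclusion of Prop.\ \ref{prop:rhoN2_detRho}. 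Both are sound: the CLT route essentially re-derives the content of the cited reference (your mean-to-standard-deviation ratio works out to precisely $\rho\sqrt{2N}$, matching \eqref{eq:ROC_detMF}), and your reduction additionally explains \emph{why} \eqref{eq:ROC_detRho_small} and \eqref{eq:ROC_detMF} coincide---a fact the paper uses in Sec.\ \ref{sec:pd} (to reuse Fig.\ \ref{fig:pd_Nrho2_marcum} for both detectors) but never motivates. You also correctly identify the one delicate step, namely that the $O_P(1/\sqrt{N})$ fluctuations of the random denominator perturb the statistic by an amount that is $o(1)$ relative to the $1/\sqrt{N}$ scale of the distribution near threshold in the regime where $N\rho^2$ is held fixed; your order-counting there is loose (the factor-of-two in the variance of $R_c$, and the exact bookkeeping of the coupling term) but the conclusion survives. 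What the paper's citation-based proof buys is brevity and reliance on an exact finite-$N$ result; what yours buys is a self-contained argument within the paper and a structural insight into the asymptotic equivalence of the two detectors.
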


\begin{proof}
	Under the stated conditions, we showed in \cite{luong2022structured} that the ROC curve is
	\begin{equation} \label{eq:ROC_detMF}
		\pd(\pfa | \rho, N) = Q_1 \mleft( \rho\sqrt{2N}, \sqrt{-2 \ln \smash{\pfa}} \mright),
	\end{equation}
	which depends on $N$ and $\rho$ only through $N\rho^2$.
\end{proof}

The emergence of $N\rho^2$ is less trivial than it appears, because from a fundamental point of view, there is no reason to expect that the particular combination $N\rho^2$ should play any role at all. For example, the probability density function of $\hat{\rho}$ is
\begin{multline} \label{eq:PDF_rho}
	f_{\hat{\rho}}(x | \rho, N) = 2 (N-1)(1-\rho^2)^N x(1-x^2)^{N-2} \\
		\times {}_2F_1(N, N; 1; \rho^2 x^2)
\end{multline}
where ${}_2F_1$ is the Gaussian hypergeometric function \cite{luong2022structured}. The combination $N\rho^2$ does not appear in this expression, nor is it obvious how $N\rho^2$ would arise as $N \to \infty$ and $\rho \to 0$.

These results suggest the following tradeoff between $\rho$ and $N$: if $\rho$ is increased by a factor of $\alpha$, then the number of integrated samples required to achieve the same detection performance is reduced to $N/\alpha^2$. This explains the experimental observation in \cite{luong2019roc}, where it was found that the QTMS radar prototype required an integration time about eight times shorter than that of a standard noise radar with the same transmit power. This improvement is a result of the fact that the QTMS radar improved $\rho$ by a factor of approximately three. (The expected reduction in the integration time should be a factor of nine, but the QTMS radar experiment had not accumulated enough experimental data to establish this with certainty.) The fact that detection performance depends on $N\rho^2$ is really a mathematical representation of the tradeoff between the cost of a radar and its detection performance. We showed in \cite{luong2022performance} that $\rho$ depends on the fidelity of the reference signal to the transmit signal, as well as the parameters in the radar range equation (transmit power, antenna gain, etc.). Hence, increasing $\rho$ really means building a better radar---which costs money. The payoff comes in reducing the required $N$, which means the radar can detect a given target more quickly. This would be helpful in detecting small, fast-moving targets that remain in the radar's field of view for small periods of time, such as unmanned aerial vehicles.

It is true that the preceding propositions require $N$ to be large and $\rho$ to be small. Although they appear to be only mathematical simplifications at first glance, these assumptions also have a physical significance. As shown in \cite{luong2022performance}, $\rho$ depends on the radar cross section and the range of the target. Hence, targets that are difficult to detect---because they are far away, small, or have low reflectivity---correspond to low values of $\rho$. But this is exactly the circumstance under which radars are most useful, and an understanding of their detection performance most critical. And it stands to reason that difficult-to-detect targets require longer integration times, which fits in with the requirement that $N$ be large. Of course, the case of small $N$ and large $\rho$ is also of importance, and we will explore this case as well. There are, in theory, two other cases: $\rho$ and $N$ either both large or both small. In the former case, the target will almost certainly be detected; in the latter case, one can hardly expect to see anything. These cases can safely be ignored.

\section{Probability of Detection vs.\ $N\rho^2$}
\label{sec:pd}

In the spirit of plots such as \cite[Fig.\ 2.6]{skolnik1962introduction} that show $\pd$ as a function of SNR, we now present plots of $\pd$ as a function of $N\rho^2$. Naturally, SNR is a very different quantity from $N\rho^2$, but both quantities help radar designers determine the parameters needed to achieve a given level of detection performance.

\begin{figure}[t]
	\centerline{\includegraphics[width=\columnwidth]{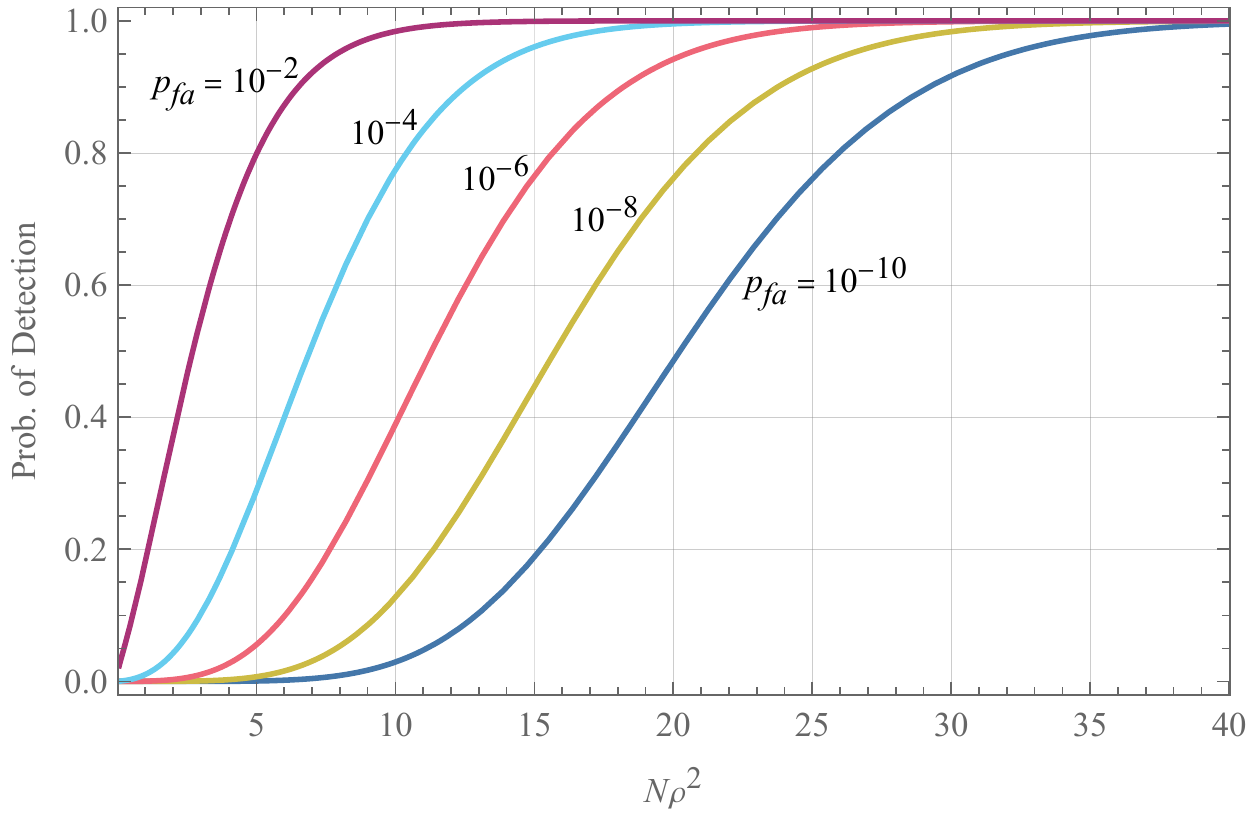}}
	\caption{Probability of detection as a function of $N\rho^2$ for the detector $D_0$. From left to right, the curves are for false alarm probabilities of $10^{-2}$, $10^{-4}$, $10^{-6}$, $10^{-8}$, and $10^{-10}$.}
	\label{fig:pd_Nrho2_erf}
\end{figure}

\begin{figure}[t]
	\centerline{\includegraphics[width=\columnwidth]{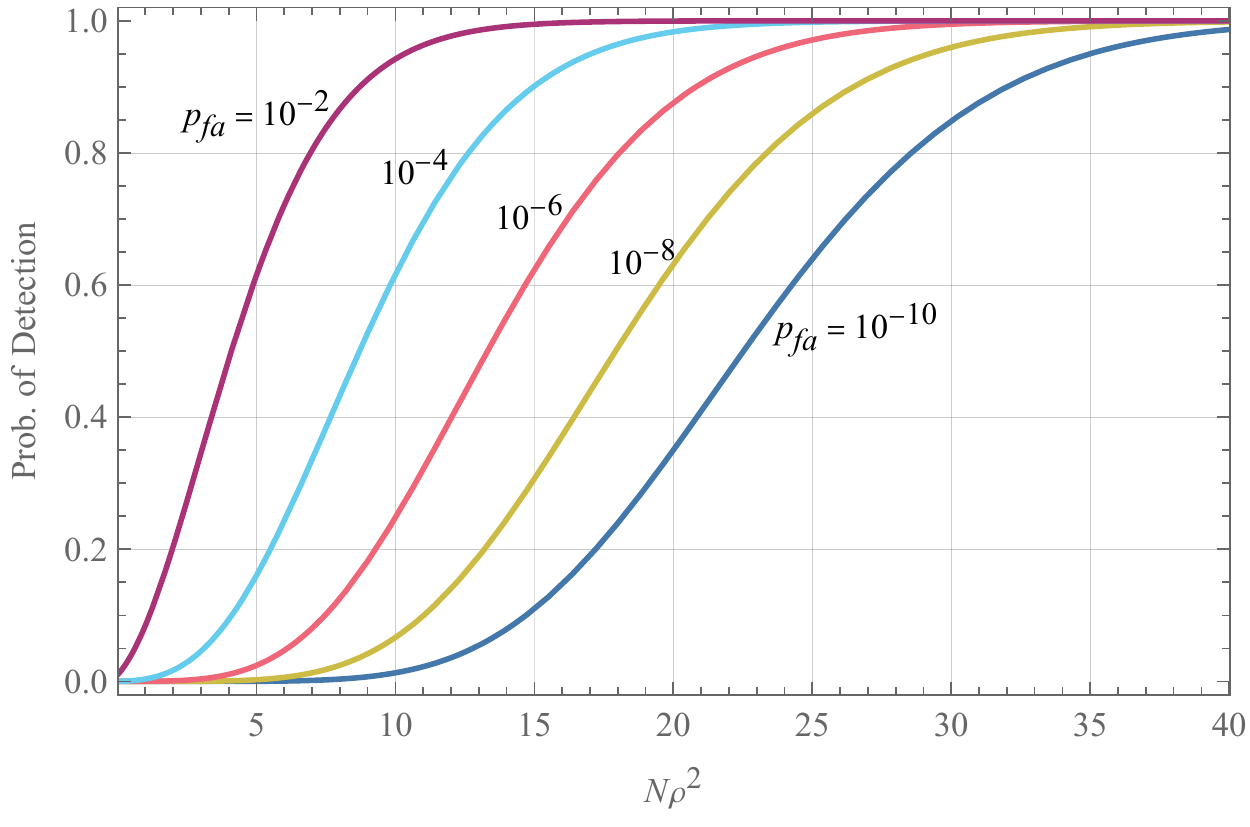}}
	\caption{Probability of detection as a function of $N\rho^2$ for the detectors $\hat{\rho}$ and $D_\mathrm{MF}$. From left to right, the curves are for false alarm probabilities of $10^{-2}$, $10^{-4}$, $10^{-6}$, $10^{-8}$, and $10^{-10}$.}
	\label{fig:pd_Nrho2_marcum}
\end{figure}

In Fig.\ \ref{fig:pd_Nrho2_erf}, we plot $\pd$ as a function of $N\rho^2$ using the ROC curve expression \eqref{eq:ROC_0_small}, which describes the detection performance of $D_0$ in the limit of large $N$ and small $\rho$. Similarly, Fig.\ \ref{fig:pd_Nrho2_marcum} shows plots of $\pd$ as a function of $N\rho^2$ for $\hat{\rho}$ using the expression \eqref{eq:ROC_detRho_small}. Since \eqref{eq:ROC_detRho_small} is equal to \eqref{eq:ROC_detMF}, Fig.\ \ref{fig:pd_Nrho2_marcum} is valid for $D_\mathrm{MF}$ as well. 

These plots can be used to obtain a quick estimate of the parameter values needed to achieve a given level of detection performance. As an example, we may read from Fig.\ \ref{fig:pd_Nrho2_marcum} that, in order to achieve $\pd = 0.95$ and $\pfa = 10^{-6}$ using $D_\mathrm{MF}$, we require at least $N\rho^2 \approx 25$. If we had a noise radar that achieves $\rho = 0.1$, we would need to integrate 2500 samples to obtain the specified $\pd$ and $\pfa$. This value of $N$ corresponds to an integration time of 2.5~ms if the radar receiver uses a sampling rate of 1~MHz.

\subsection{Approximating $\pd$ Using a Generalized Logistic Function}

In the ROC curve expressions \eqref{eq:ROC_0_small}, \eqref{eq:ROC_detRho_small}, and \eqref{eq:ROC_detMF}, the complementary error function and the Marcum $Q$-function appear. Neither of them can be expressed in terms of elementary functions, which makes them somewhat clumsy to work with. In particular, they cannot readily be inverted to determine $N\rho^2$ as a function of $\pd$ and $\pfa$.

It is true that both functions, particularly the error function, are well-studied in the literature. There are any number of excellent approximations available for these functions. However, these approximations must be chosen with care. For example,
\begin{equation}
	\erf(x) \approx \tanh \mleft( \frac{2x}{\sqrt{\pi}} \mright)
\end{equation}
is an good approximation for most values of $x$ and is moreover easily invertible. (The factor of $2/\!\sqrt{\pi}$ ensures that both functions agree to first order in their Taylor series.) But surprisingly, when substituted into \eqref{eq:ROC_0_small}, it leads to totally inaccurate estimates of the curves in Fig.\ \ref{fig:pd_Nrho2_erf}. This suggests that we should consider a slightly more sophisticated approach when approximating the error function and the Marcum $Q$-function.

In this subsection, we take inspiration from the fact that the ROC curves \eqref{eq:ROC_0_small}, \eqref{eq:ROC_detRho_small}, and \eqref{eq:ROC_detMF} are sigmoid functions, and consider approximations of the form
\begin{equation} \label{eq:gen_logistic}
	y(x | A, S, k, d) = \frac{A}{(1 + S e^{-kx})^d}.
\end{equation}
This is one of the generalizations of the logistic function studied in \cite{tjoerve2010unified}; the standard logistic function may be recovered by setting $A = S = k = d = 1$. Note that $y(x | A, S, k, d)$ is easily inverted:
\begin{equation}
	x = \frac{1}{k} \ln \mleft[ \frac{S}{(A/y)^{1/d} - 1} \mright].
\end{equation}

Our task is to find appropriate values of the parameters $A$, $S$, $k$, and $d$ such that $y(x | A, S, k, d)$ are good approximations for the expressions \eqref{eq:ROC_0_small}, \eqref{eq:ROC_detRho_small}, and \eqref{eq:ROC_detMF}. We may immediately choose $A = 1$ because this parameter controls the upper asymptote of the generalized logistic function, which we know must equal 1 because $\pd \leq 1$. In order to choose the other three parameters, the approach we will take is to numerically minimize the integral square errors
\begin{align}
	\epsilon_0 &= \frac{1}{L} \int_{0}^{L} \mleft[ \erfc \! \big[ \! \erfc^{-1}(2 p_\mathit{fa}) - \sqrt{x} \big] - y(x | S, k, d) \mright]^2 \, dx \\
	\epsilon_1 &= \frac{1}{L} \int_{0}^{L} \mleft[ Q_1 \big( \sqrt{2x}, \sqrt{-2 \ln \smash{\pfa}} \big) - y(x | S, k, d) \mright]^2 \, dx
\end{align}
where $x = N\rho^2$. In principle, the integration should run from zero to infinity. However, to avoid numerical issues associated with improper integrals, we take
\begin{equation}
	L = 15 - 3 \log_{10}(\pfa).
\end{equation}
This was chosen empirically (and somewhat arbitrarily) so that transition in the ROC curves from 0 to 1 is fully captured.

In the course of minimizing $\epsilon_0$ and $\epsilon_1$, we found that in many cases, $d \approx 2$. Therefore, we fix $d = 2$ and minimize only over $S$ and $k$. The resulting parameter values are summarized in Tables \ref{table:logistic_params_erf} (for $D_0$) and \ref{table:logistic_params_marcum} ($\hat{\rho}$ and $D_\mathrm{MF}$). The tables show that, for the given choices of $S$ and $k$, the error in the approximation is on the order of $10^{-4}$ or $10^{-5}$. Figs.\ \ref{fig:pd_Nrho2_logistic_erf} and \ref{fig:pd_Nrho2_logistic_marcum} show that the generalized logistic approximation results in excellent fits for the graphs of $\pd$ vs.\ $N\rho^2$.

\begin{table}[t]
	\centering
	\caption{Parameters used to fit the generalized logistic function to the ROC curve of $D_0$}
	\label{table:logistic_params_erf}
	\begin{tabular}{S[table-parse-only]S[table-format=2.6]S[table-format=1.6]S[table-format=2.5]}
		\toprule
		{$\pfa$} & {$S$} & {$k$} & {$\epsilon_0 \, (\times 10^5)$} \\
		\midrule
		e-1 & 0.979525 & 0.943077 & 17.2152 \\ 
		e-2 & 2.29804 & 0.603857 & 19.0271 \\ 
		e-3 & 4.01166 & 0.464131 & 13.4921 \\ 
		e-4 & 6.16081 & 0.385126 & 8.98033 \\ 
		e-5 & 8.96457 & 0.335449 & 6.20789 \\ 
		e-6 & 12.7266 & 0.301694 & 4.57958 \\ 
		e-7 & 17.3843 & 0.275690 & 3.67197 \\ 
		e-8 & 23.1463 & 0.255142 & 3.16590 \\ 
		e-9 & 25.6120 & 0.229508 & 5.98379 \\ 
		e-10 & 38.0617 & 0.223508 & 2.83859 \\
		\bottomrule
	\end{tabular}
\end{table}

\begin{table}[t]
	\centering
	\caption{Parameters used to fit the generalized logistic function to the ROC curves of $\hat{\rho}$ and $D_\mathrm{MF}$}
	\label{table:logistic_params_marcum}
	\begin{tabular}{S[table-parse-only]S[table-format=2.5]S[table-format=1.6]S[table-format=2.5]}
		\toprule
		{$\pfa$} & {$S$} & {$k$} & {$\epsilon_1 \, (\times 10^5)$} \\
		\midrule
		e-1 & 1.38363 & 0.642130 & 13.4719 \\ 
		e-2 & 3.13574 & 0.480237 & 15.2444 \\ 
		e-3 & 5.32653 & 0.392948 & 10.5543 \\ 
		e-4 & 8.19316 & 0.339720 & 7.14549 \\ 
		e-5 & 11.7823 & 0.302533 & 5.15448 \\ 
		e-6 & 16.2928 & 0.275043 & 4.03962 \\ 
		e-7 & 22.2072 & 0.254621 & 3.43057 \\ 
		e-8 & 28.7084 & 0.236497 & 3.11755 \\ 
		e-9 & 36.0844 & 0.221113 & 3.17315 \\ 
		e-10 & 52.9000 & 0.214206 & 4.98972 \\
		\bottomrule
	\end{tabular}
\end{table}

\begin{figure}[t]
	\centerline{\includegraphics[width=\columnwidth]{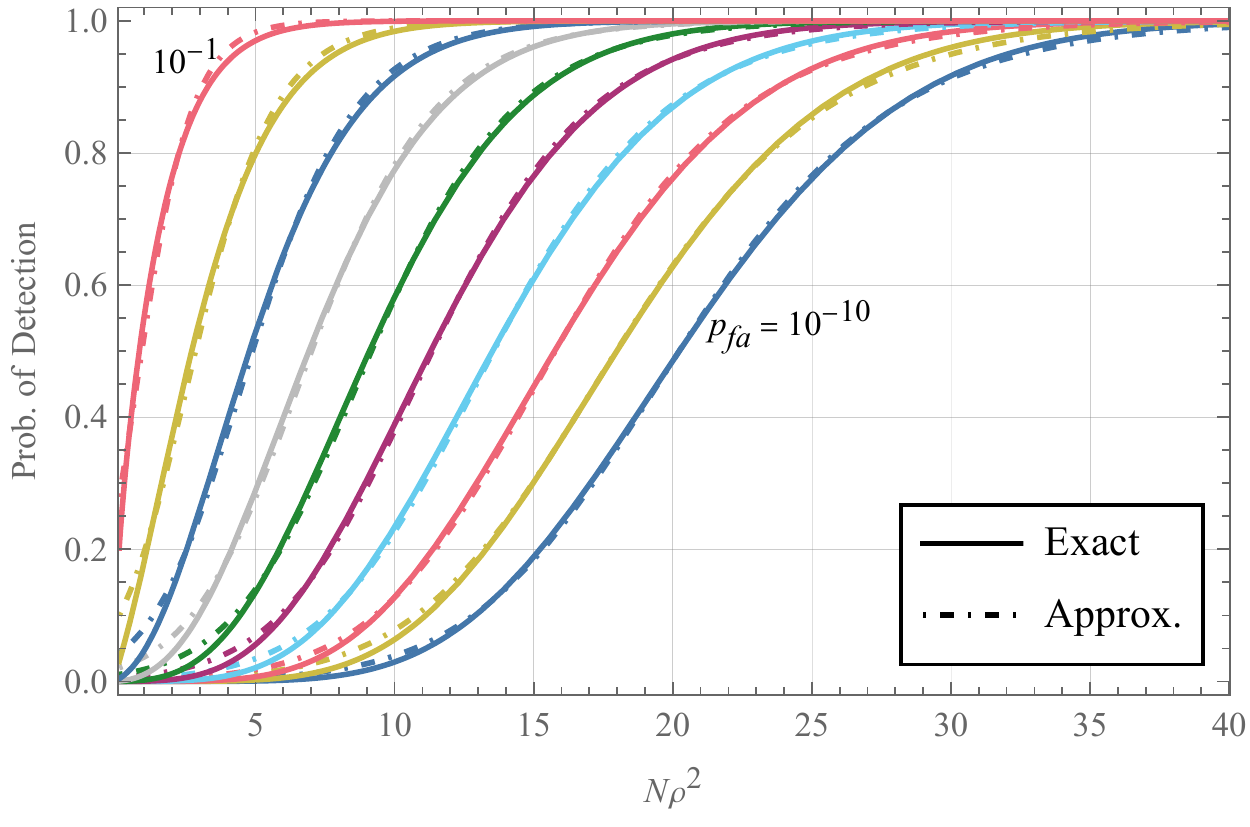}}
	\caption{Probability of detection as a function of $N\rho^2$ for the detector $D_0$, plotted together with approximations using the generalized logistic function. From left to right, the curves are for $\pfa = 10^{-1}, 10^{-2}, \dots, 10^{-10}$.}
	\label{fig:pd_Nrho2_logistic_erf}
\end{figure}

\begin{figure}[t]
	\centerline{\includegraphics[width=\columnwidth]{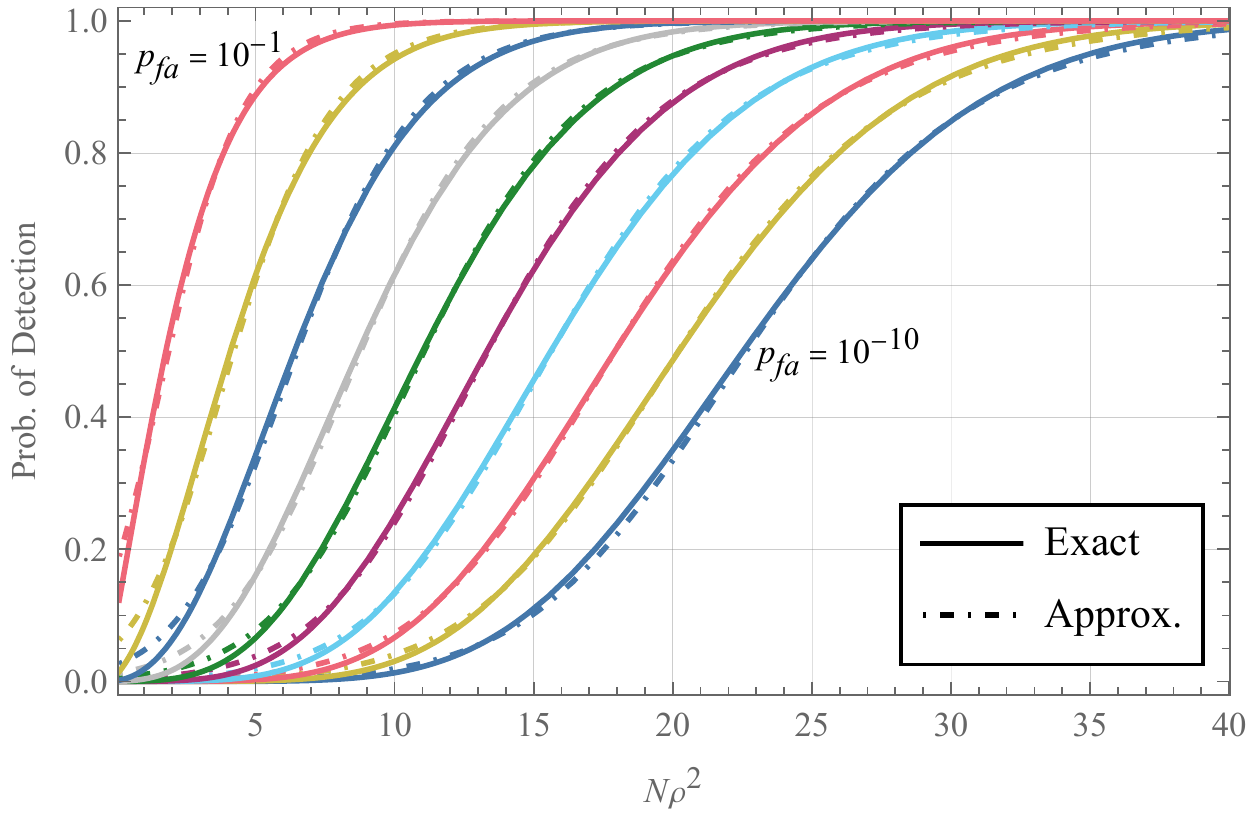}}
	\caption{Probability of detection as a function of $N\rho^2$ for the detectors $\hat{\rho}$ and $D_\mathrm{MF}$, plotted together with approximations using the generalized logistic function. From left to right, the curves are for $\pfa = 10^{-1}, 10^{-2}, \dots, 10^{-10}$.}
	\label{fig:pd_Nrho2_logistic_marcum}
\end{figure}

\subsection{Large $\rho$}

Recall that \eqref{eq:ROC_0_small}, \eqref{eq:ROC_detRho_small}, and \eqref{eq:ROC_detMF} were all derived under the assumption that $\rho$ is small and $N$ is large. When these assumptions are violated, we may expect deviations from Figs.\ \ref{fig:pd_Nrho2_erf} and \ref{fig:pd_Nrho2_marcum}. We found that for $\rho$ as large as 0.2, the deviation is minimal. However, when $\rho$ is increased further, the deviation becomes more evident. 

\begin{figure}[t]
	\centerline{\includegraphics[width=\columnwidth]{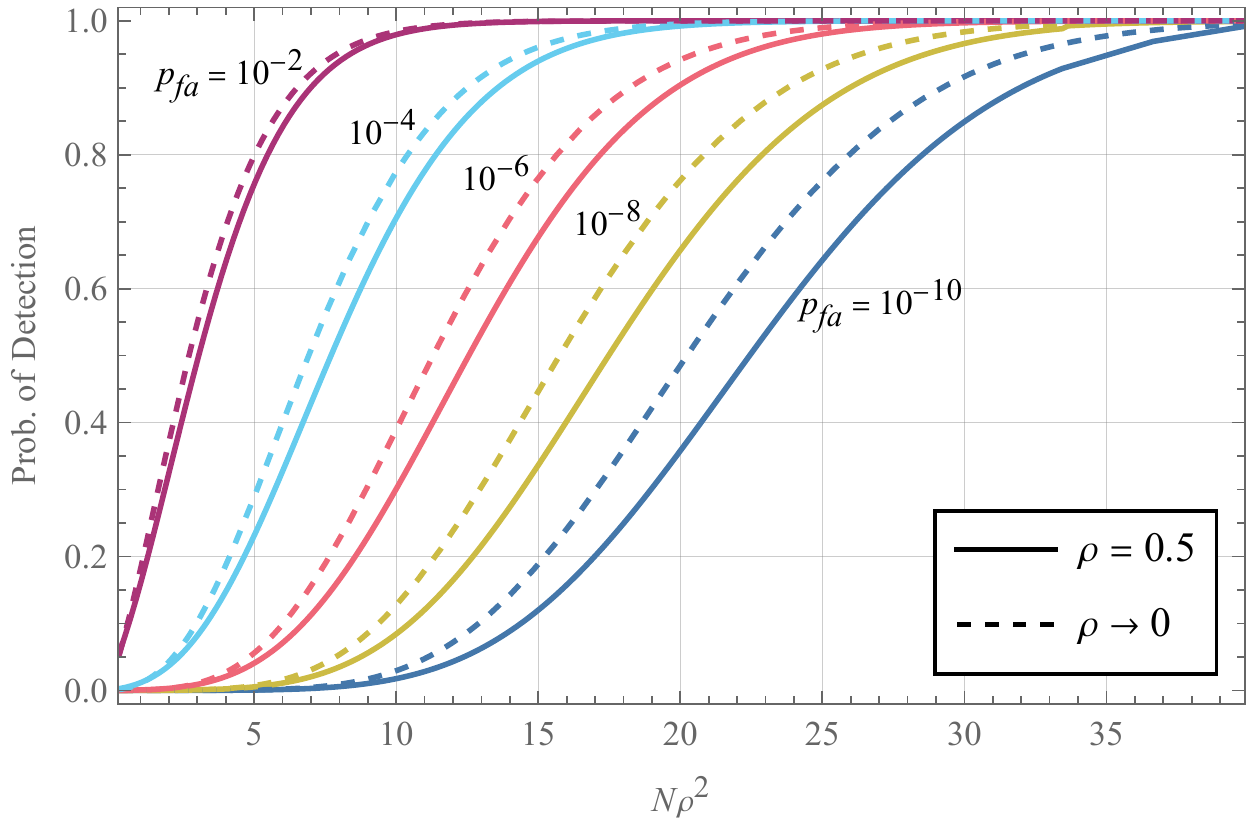}}
	\caption{Probability of detection as a function of $N\rho^2$ for $D_0$ when $\rho = 0.5$ (solid lines) and $\rho \to 0$ (dashed lines). From left to right, the curves are for false alarm probabilities of $10^{-10}$, $10^{-8}$, $10^{-6}$, $10^{-4}$, and $10^{-2}$.}
	\label{fig:pd_large_rho_detNP}
\end{figure}

\begin{figure}[t]
	\centerline{\includegraphics[width=\columnwidth]{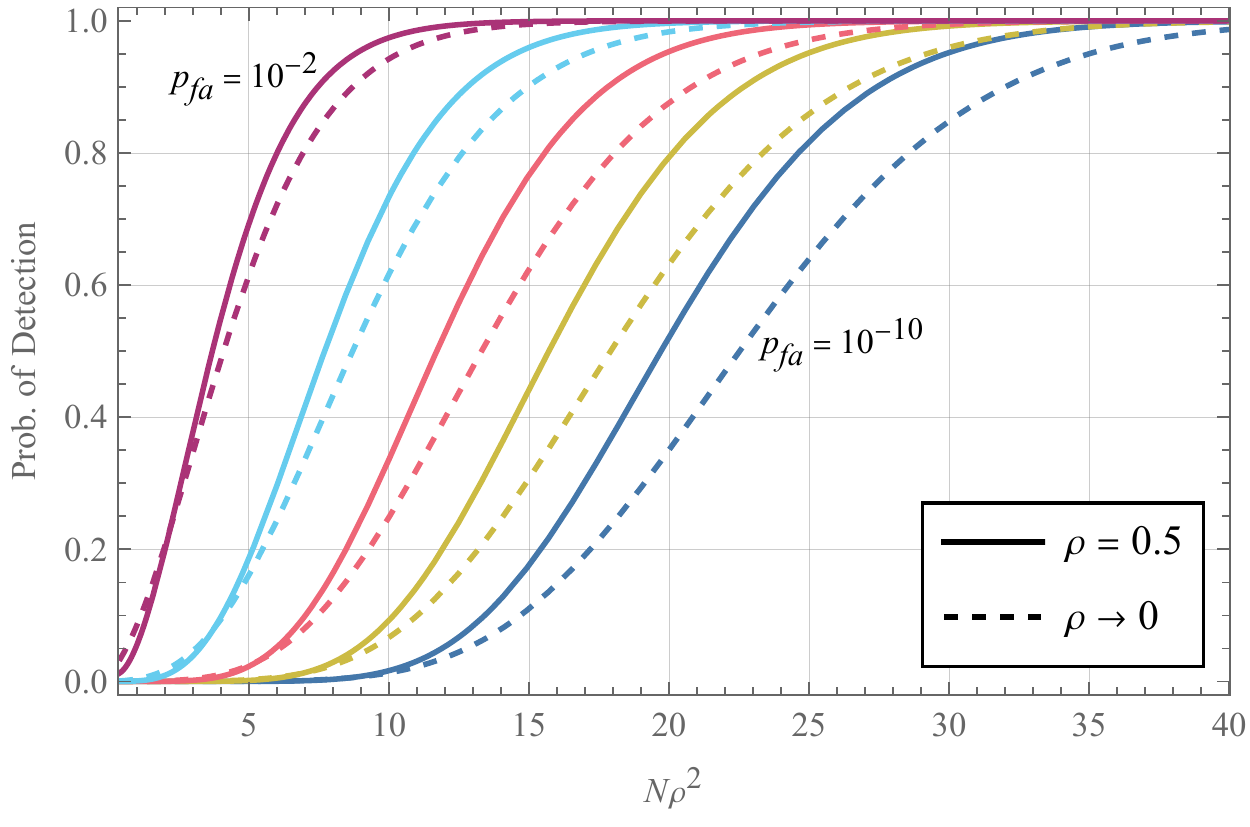}}
	\caption{Probability of detection as a function of $N\rho^2$ for the detector $\hat{\rho}$ when $\rho = 0.5$ (solid lines) and $\rho \to 0$ (dashed lines). From left to right, the curves are for false alarm probabilities of $10^{-10}$, $10^{-8}$, $10^{-6}$, $10^{-4}$, and $10^{-2}$.}
	\label{fig:pd_large_rho_detRho}
\end{figure}

\begin{figure}[t]
	\centerline{\includegraphics[width=\columnwidth]{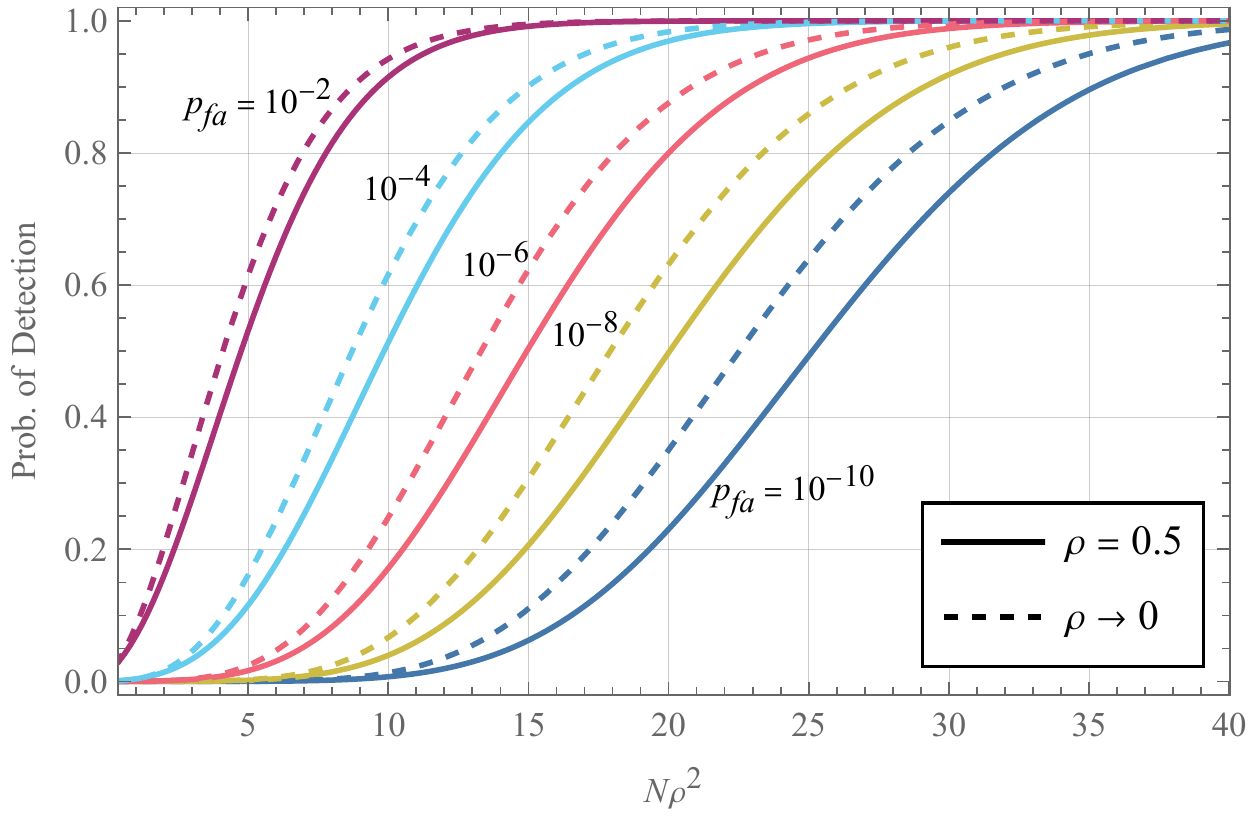}}
	\caption{Probability of detection as a function of $N\rho^2$ for $D_\mathrm{MF}$ when $\rho = 0.5$ (solid lines) and $\rho \to 0$ (dashed lines). From left to right, the curves are for false alarm probabilities of $10^{-10}$, $10^{-8}$, $10^{-6}$, $10^{-4}$, and $10^{-2}$.}
	\label{fig:pd_large_rho_detMF}
\end{figure}

In Figs.\ \ref{fig:pd_large_rho_detNP}, \ref{fig:pd_large_rho_detRho}, and \ref{fig:pd_large_rho_detMF}, we set $\rho = 0.5$ and plot $\pd$ vs.\ $N\rho^2$ for $D_0$, $\hat{\rho}$, and $D_\mathrm{MF}$ respectively. Since Props.\ \ref{prop:rhoN2_detNP}--\ref{prop:rhoN2_detMF} apply only when $\rho$ is small, the relevant ROC curves depend on $\rho$ and $N$ separately. Therefore, these plots are really of $\pd$ vs.\ $N$, but with the horizontal axis rescaled by a factor of $\rho^2$. (Note that the plots were obtained by numerically integrating the exact probability distributions for each detector; no approximations were used.) We see that when $\rho$ is large, the small-$\rho$ plots underestimate the required $N\rho^2$ for $D_0$ and $D_\mathrm{MF}$, but \emph{over}estimate it for the detector $\hat{\rho}$. This may be explained on the grounds that $D_0$ is the locally most powerful test statistic for $\rho$ close to zero, so there is no guarantee that it will perform well when $\rho$ is large. Similarly, the matched filter output $D_\mathrm{MF}$ maximizes SNR, not probability of detection, so one need not be surprised that it underperforms the maximum likelihood ratio detector $\hat{\rho}$.

\section{Conclusion}
\label{sec:conclusion}

The main result of this paper is that, when a noise-type radar integrates a large number of samples $N$ and its correlation coefficient $\rho$ is small, the detection performance of the radar depends on $N$ and $\rho$ only through the combination $N\rho^2$. This holds for a variety of detectors, including the three detectors listed in Sec.\ \ref{sec:detectors}. Since the ROC curves for these detectors depend only on $N\rho^2$, there is a simple tradeoff between $N$ and $\rho$: an improvement in $\rho$ by a factor of $\alpha$ is the same as improving $N$ by a factor of $\alpha^2$.

Having established the importance of the quantity $N\rho^2$, we show that we can plot the probability of detection as a function of $N\rho^2$. This is similar to the well-known plots of $\pd$ vs.\ SNR that are common in the radar literature. Moreover, we give simple approximations of $\pd$ based on a generalization of the logistic function. Finally, we showed how our results would change when the small-$\rho$, large-$N$ assumption is violated. These results can serve as aids for designing noise-type radars that are required to achieve given values of $\pd$ and $\pfa$.

The appearance of the specific combination $N\rho^2$ is not obvious from fundamental considerations such as the probability distributions for $\hat{\rho}$ or the other detectors. It is for this reason that the results in this paper are nontrivial. However, the fact that $N\rho^2$ emerges in the small-$\rho$, large-$N$ limit for a variety of detectors (including the ones in Sec.\ \ref{sec:detectors}) suggests that there may be some underlying theoretical explanation, independent of the specific detector being used. The search for such an explanation is a promising subject of future work, as it would probably throw an interesting light on the mathematical foundations of noise-like radars.

\section*{Acknowledgment}

This work was supported by the Natural Science and Engineering Research Council of Canada (NSERC). D.\ Luong also acknowledges the support of a Vanier Canada Graduate Scholarship.

\bibliographystyle{ieeetran}
\bibliography{qradar_refs,own_refs}

% Generated by IEEEtran.bst, version: 1.14 (2015/08/26)
\begin{thebibliography}{10}
\providecommand{\url}[1]{#1}
\csname url@samestyle\endcsname
\providecommand{\newblock}{\relax}
\providecommand{\bibinfo}[2]{#2}
\providecommand{\BIBentrySTDinterwordspacing}{\spaceskip=0pt\relax}
\providecommand{\BIBentryALTinterwordstretchfactor}{4}
\providecommand{\BIBentryALTinterwordspacing}{\spaceskip=\fontdimen2\font plus
\BIBentryALTinterwordstretchfactor\fontdimen3\font minus
  \fontdimen4\font\relax}
\providecommand{\BIBforeignlanguage}[2]{{%
\expandafter\ifx\csname l@#1\endcsname\relax
\typeout{** WARNING: IEEEtran.bst: No hyphenation pattern has been}%
\typeout{** loaded for the language `#1'. Using the pattern for}%
\typeout{** the default language instead.}%
\else
\language=\csname l@#1\endcsname
\fi
#2}}
\providecommand{\BIBdecl}{\relax}
\BIBdecl

\bibitem{lloyd2008qi}
S.~Lloyd, ``Enhanced sensitivity of photodetection via quantum illumination,''
  \emph{Science}, vol. 321, no. 5895, pp. 1463--1465, Sep. 2008.

\bibitem{tan2008quantum}
S.-H. Tan, B.~I. Erkmen, V.~Giovannetti, S.~Guha, S.~Lloyd, L.~Maccone,
  S.~Pirandola, and J.~H. Shapiro, ``Quantum illumination with {Gaussian}
  states,'' \emph{Physical Review Letters}, vol. 101, p. 253601, Dec. 2008.

\bibitem{lopaeva2013qi}
E.~D. Lopaeva, I.~Ruo~Berchera, I.~P. Degiovanni, S.~Olivares, G.~Brida, and
  M.~Genovese, ``Experimental realization of quantum illumination,''
  \emph{Physical Review Letters}, vol. 110, p. 153603, Apr. 2013.

\bibitem{zhang2015entanglement}
Z.~Zhang, S.~Mouradian, F.~N.~C. Wong, and J.~H. Shapiro,
  ``Entanglement-enhanced sensing in a lossy and noisy environment,''
  \emph{Physical Review Letters}, vol. 114, p. 110506, Mar. 2015.

\bibitem{balaji2018qi}
B.~Balaji and D.~England, ``Quantum illumination: A laboratory investigation,''
  in \emph{Proceedings of the 2018 International Carnahan Conference on
  Security Technology (ICCST)}, Oct. 2018, pp. 1--4.

\bibitem{chang2018quantum}
C.~W.~S. Chang, A.~M. Vadiraj, J.~Bourassa, B.~Balaji, and C.~M. Wilson,
  ``Quantum-enhanced noise radar,'' \emph{Applied Physics Letters}, vol. 114,
  no.~11, p. 112601, Mar. 2019.

\bibitem{luong2019roc}
D.~Luong, C.~W.~S. Chang, A.~M. Vadiraj, A.~Damini, C.~M. Wilson, and
  B.~Balaji, ``Receiver operating characteristics for a prototype quantum
  two-mode squeezing radar,'' \emph{{IEEE} Transactions on Aerospace and
  Electronic Systems}, vol.~56, no.~3, pp. 2041--2060, Jun. 2020.

\bibitem{barzanjeh2019experimental}
S.~Barzanjeh, S.~Pirandola, D.~Vitali, and J.~M. Fink, ``Microwave quantum
  illumination using a digital receiver,'' \emph{Science Advances}, vol.~6,
  no.~19, p. eabb0451, May 2020.

\bibitem{cooper1967random}
G.~R. Cooper, C.~D. McGillem, J.~I. Smith, W.~L. Weeks, L.~Bennett, R.~Emmert,
  R.~Gassner, and W.~Waltman, ``Random signal radar,'' Purdue University, Tech.
  Rep. TR-EE 67-11, Jun. 1967.

\bibitem{lukin1998millimeter}
K.~A. Lukin, ``Millimeter wave noise radar technology,'' in \emph{Proceedings
  of the 3rd International Kharkiv Symposium on Physics and Engineering of
  Microwaves, Millimeter and Submillimeter Waves (MSMW)}.\hskip 1em plus 0.5em
  minus 0.4em\relax {IEEE}, 1998.

\bibitem{narayanan2004design}
Y.~Zhang and R.~M. {Narayanan}, ``Design considerations for a real-time
  random-noise tracking radar,'' \emph{IEEE Transactions on Aerospace and
  Electronic Systems}, vol.~40, no.~2, pp. 434--445, Apr. 2004.

\bibitem{thayaparan2006noise}
T.~Thayaparan and C.~Wernik, ``Noise radar technology basics,'' Defence
  Research and Development Canada, Tech. Mem. {DRDC} {O}ttawa {TM} 2006-266,
  Dec. 2006.

\bibitem{lukin2008Ka}
K.~A. Lukin, V.~P. Palamarchuk, A.~A. Mogyla, O.~V. Zemlyaniy, M.~Zaets, P.~L.
  Vyplavin, and Y.~A. Shiyan, ``Ka-band bistatic ground-based noise waveform
  {SAR} for short-range applications,'' \emph{{IET} Radar, Sonar {\&}
  Navigation}, vol.~2, no.~4, pp. 233--243, Aug. 2008.

\bibitem{tarchi2010SAR}
D.~{Tarchi}, K.~{Lukin}, J.~{Fortuny-Guasch}, A.~{Mogyla}, P.~{Vyplavin}, and
  A.~{Sieber}, ``{SAR} imaging with noise radar,'' \emph{IEEE Transactions on
  Aerospace and Electronic Systems}, vol.~46, no.~3, pp. 1214--1225, Jul. 2010.

\bibitem{narayanan2016noise}
R.~Narayanan, ``Noise radar techniques and progress,'' in \emph{Advanced
  Ultrawideband Radar: Signals, Targets, and Applications}, J.~D. Taylor,
  Ed.\hskip 1em plus 0.5em minus 0.4em\relax {CRC} Press, 2016, pp. 323--361.

\bibitem{savci2019trials}
K.~Savci, A.~G. Stove, A.~Y. Erdogan, G.~Galati, K.~A. Lukin, G.~Pavan, and
  C.~Wasserzier, ``Trials of a noise-modulated radar demonstrator---first
  results in a marine environment,'' in \emph{Proceedings of the 20th
  International Radar Symposium ({IRS})}, Jun. 2019.

\bibitem{savci2020noise}
K.~Savci, A.~G. Stove, F.~D. Palo, A.~Y. Erdogan, G.~Galati, K.~A. Lukin,
  S.~Lukin, P.~Marques, G.~Pavan, and C.~Wasserzier, ``Noise
  radar{\textemdash}overview and recent developments,'' \emph{{IEEE} Aerospace
  and Electronic Systems Magazine}, vol.~35, no.~9, pp. 8--20, Sep. 2020.

\bibitem{luong2019cov}
D.~Luong and B.~Balaji, ``Quantum two-mode squeezing radar and noise radar:
  covariance matrices for signal processing,'' \emph{{IET} Radar, Sonar {\&}
  Navigation}, vol.~14, no.~1, pp. 97--104, Jan. 2020.

\bibitem{kulpa2013signal}
K.~Kulpa, \emph{Signal Processing in Noise Waveform Radar}, ser. Radar.\hskip
  1em plus 0.5em minus 0.4em\relax Artech House, 2013.

\bibitem{wasserzier2019noise}
C.~Wasserzier, J.~G. Worms, and D.~W. O'{H}agan, ``How noise radar technology
  brings together active sensing and modern electronic warfare techniques in a
  combined sensor concept,'' in \emph{Proceedings of the 2019 Sensor Signal
  Processing for Defence Conference ({SSPD})}, May 2019, pp. 1--5.

\bibitem{luong2020magazine}
D.~Luong, S.~Rajan, and B.~Balaji, ``Entanglement-based quantum radar: From
  myth to reality,'' \emph{{IEEE} Aerospace and Electronics Systems Magazine},
  vol.~35, no.~4, pp. 22--35, Apr. 2020.

\bibitem{luong2020int}
D.~Luong, B.~Balaji, and S.~Rajan, ``Quantum two-mode squeezing radar and noise
  radar: Correlation coefficient and integration time,'' \emph{{{IEEE}}
  Access}, vol.~8, pp. 185\,544--185\,547, Dec. 2020.

\bibitem{luong2022structured}
------, ``Structured covariance matrix estimation for noise-type radars,''
  \emph{{IEEE} Transactions on Geoscience and Remote Sensing}, vol.~60, p.
  5114813, Jul. 2022.

\bibitem{skolnik1962introduction}
M.~I. Skolnik, \emph{Introduction to radar systems}, 3rd~ed.\hskip 1em plus
  0.5em minus 0.4em\relax McGraw-Hill, 2001.

\bibitem{luong2022family}
D.~Luong, B.~Balaji, and S.~Rajan, ``A family of {N}eyman-{P}earson-based
  detectors for noise-type radars,'' \emph{arXiv:2204.07881}, Apr. 2022.

\bibitem{luong2019rice}
D.~Luong, S.~Rajan, and B.~Balaji, ``Quantum two-mode squeezing radar and noise
  radar: Correlation coefficients for target detection,'' \emph{{IEEE} Sensors
  Journal}, vol.~20, no.~10, pp. 5221--5228, May 2020.

\bibitem{luong2022performance}
D.~Luong, B.~Balaji, and S.~Rajan, ``Performance prediction for coherent noise
  radars using the correlation coefficient,'' \emph{{{IEEE}} Access}, vol.~10,
  pp. 8627--8633, Jan. 2022.

\bibitem{tjoerve2010unified}
E.~Tj{\o}rve and K.~M.~C. Tj{\o}rve, ``A unified approach to the
  {R}ichards-model family for use in growth analyses: Why we need only two
  model forms,'' \emph{Journal of Theoretical Biology}, vol. 267, no.~3, pp.
  417--425, Dec. 2010.

\end{thebibliography}
	
\end{document}